\newtheorem{thm}{Theorem}[section]
\newtheorem{lem}[thm]{Lemma}
\newtheorem{prop}[thm]{Proposition}
\newtheorem{rmk}[thm]{Remark}
\newtheorem{ex}[thm]{Example}
\newcommand{\N}{\mathbb{N}}
\newcommand{\R}{\mathbb{R}}
\newcommand{\Z}{\mathbb{Z}}
\DeclareMathAlphabet{\mathpzc}{OT1}{pzc}{m}{it}
\begin{document}

\title{Asymptotic periodicity in networks of degrade-and-fire oscillators}
\author{Alex Blumenthal$^1$ and Bastien Fernandez$^2$\footnote{On leave from Centre de Physique Th\'{e}orique, CNRS - Aix-Marseille Universit\'e - Universit\'e de Toulon, Campus de Luminy, 13288 Marseille CEDEX 09 France}}
\maketitle

\begin{center}
$^1$ Courant Institute of Mathematical Sciences\\
New York University\\
New York, NY 10012, USA

\vskip0.5cm

$^2$ Laboratoire de Probabilit\'es et Mod\`eles Al\'eatoires\\
CNRS - Universit\'e Paris 7 Denis Diderot\\
75205 Paris CEDEX 13 France\\
\end{center}

\begin{abstract}
Networks of coupled degrade-and-fire (DF) oscillators are simple dynamical models of assemblies of interacting self-repressing genes. For mean-field interactions, which most mathematical studies have assumed so far, every trajectory must approach a periodic orbit. Moreover, asymptotic cluster distributions can be computed explicitly in terms of coupling intensity, and a massive collection of distributions collapses when this intensity passes a threshold.
%and their analysis reveals an abrupt change as the coupling increases. 
Here, we show that most of these dynamical features persist for an arbitrary coupling topology. In particular, we prove that, in any system of DF oscillators for which in and out coupling weights balance, trajectories with reasonable firing sequences must be asymptotically periodic, and periodic orbits are uniquely determined by their firing sequence. 
In addition to these structural results, illustrative examples are presented, for which the dynamics can be entirely described. 
\end{abstract}

\section{Introduction}
To predict the long-term behavior in networks of interacting units is a predominant challenge in nonlinear science, with applications in many disciplines, from physics to biology and to the social sciences, to cite a few examples \cite{S01}. In particular, a recurrent question is to characterize collective properties such as synchronisation and predictability in terms of the network topology and interaction strengths \cite{AD-GKMZ08,BCFV02}. While this problem has received considerable attention from theoreticians, mathematically rigorous descriptions of (global) nonlinear behavior are scarce, and only address limited circumstances, {\sl e.g.}\ weak-coupling regimes \cite{M96} and assemblies of pulse coupled oscillators with excitatory coupling \cite{B95,MS90,SU00}. Hence, the theory remains largely incomplete and network phenomenology still lacks a comprehensive rigorous footing.

In the last years, a model for the population dynamics of simple gene oscillators was introduced \cite{FT11}, inspired from a series of experiments on colonies of synthetic genetic circuits \cite{DM-PTH10,M-PDSTH11}, and resulting from the simplification of more standard delay-differential equation models \cite{MBHT09}. In a few words (see section 2 below for more details), it consists of a collection of pulse coupled oscillators with inhibitory coupling \cite{EPG95}, and is reminiscent of the well-known integrate-and-fire model in neuroscience \cite{B06}; however, the phenomenologies of each are distinct.

In the case of mean field coupling, a mathematically rigorous global description of the dynamics, notably its clustering and asymptotic properties, was achieved for every parameter value and for arbitrary numbers of oscillators (and also for the continuum approximation) \cite{BF14,FT11}. Analogous features were also described for trajectories issued from typical random initial conditions \cite{FT13}. In addition, a recent study expanded the analysis to a more elaborate model that involves a global activator field in the dynamics \cite{MHT14}.
Motivated by including more realistic features in the mathematical analysis, the current paper aims to extend previous (deterministic) results to arbitrary coupling topologies on populations of arbitrary size. 

A typical property of degrade-and-fire models is their firing process (accompanied with gene level resetting) that occurs when the repressor field becomes (locally) negligible and can no longer prevent gene expression. (Instantaneous resets are used here as a naive representation of massive gene expression during a tiny interval of time.) 
In the case of mean field coupling, after cell $i$ has fired, every other cell (not simultaneously firing with $i$) must fire once, before $i$ fires again. The ordering in which cells fire does not change from cycle to cycle (unless cells synchronize and begin firing together). More importantly, this periodic and exhaustive cycle of firings imposes asymptotic periodicity on the trajectories themselves. 

While periodic exhaustive firing may not always hold for an arbitrary coupling topology (examples will be provided below), the main result of this paper (Theorem \ref{ASYMPTHM} below)  states that, when this is the case, the trajectory must asymptotically approach a periodic configuration, provided that all cells are path-wise connected through coupling and in and out weights balance at every node. Even though this conclusion does not {\sl a priori} cover all trajectories of degrade-and-fire systems, it proves that a sufficiently regular firing behavior implies a regular asymptotic behavior of the trajectories themselves, under a mild restriction on the coupling structure. Together with the analysis of orbits with symmetric components, this result paves the way to a comprehensive understanding of the functioning of arbitrary systems of coupled DF oscillators.

The paper is organised as follows. The DF model of $N$-oscillators is defined in section 2 and global well-posedness of the dynamics is proved. In section 3, we study properties of the firing events, and use these features to introduce non-degenerate trajectories with exhaustive firing sequences; such trajectories are at the center of attention in the rest of the paper. In section 4, we prove that there can be at most one periodic orbit associated with each such sequence, and provide examples of existence and non-existence, in the case of nearest neighbor coupling. Section 5 contains Theorem \ref{ASYMPTHM} and its proof, while the paper is completed, in section 6, with a study of the full dynamics for $N=2$ and $N=3$ cells (assuming some coupling symmetry in the latter case).    

\section{The degrade-and-fire dynamics}
We consider the degrade-and-fire (DF) dynamics of single self-repressor genes in a colony of cells, driven by intercellular coupling \cite{FT11}. In this context, cells are indexed by $\{1,\cdots,N\}$ (where $N\in\N$) and gene expression levels at time $t\in\R^+$ are represented by the vector $x(t)=(x_i(t))_{i=1}^N\in [0,1]^N$.
Intercellular coupling of genes in this population is mitigated by a repressor field $Wx=(Wx_i)_{i=1}^N$, defined as the action of the linear operator $W$ on the vector $x$,  
\[
W x_i = \sum_{j=1}^N w_{ij} x_j,\ \forall i\in \{1,\cdots,N\},
\]
where the symbol $W=(w_{ij})_{i,j=1}^N$ also denotes a stochastic non-negative matrix. The dynamics depends as well on a threshold parameter $\eta \in (0,1)$, which is assumed to be small. Finally, we impose that the matrix diagonal terms satisfy $w_{ii} > \eta$ for all $i$. (NB: Ref.\ \cite{FT11} assumed {\bf mean field} coupling, {\sl viz.}\ $Wx_i=(1-\epsilon)x_i+\frac{\epsilon}{N}\sum_{j=1}^Nx_j$ for all $i$.) Here, we consider for now any coupling satisfying the condition $w_{ii} > \eta$ for all $i \in \{1,\cdots,N\}$, and later impose additional constraints when asymptotic periodicity is investigated.

With these definitions in place, the DF time evolution of gene expression levels is given by the following differential equation, inspired by the delay-differential equation model in \cite{MBHT09}:
\begin{align}\label{eq:diffEQ}
\begin{split}
\begin{array}{l l}
\dot{x}_i(t) = - \text{Sgn}(x_i(t)) & \text{if} \quad W x_i(t) > \eta, \\
\bigg\{ \begin{array}{l}x_i(t) = x_i(t-0) \\ x_i(t + 0) = 1 \end{array} & \text{if} \quad W x_i(t) \leq \eta.
\end{array} 
\end{split}
\end{align}
In other words, the dynamics in cell $i$ consists of two phases, depending on the repressor field $Wx_i(t)$.
\begin{itemize}
\item  When $W x_i(t) > \eta$, the expression level $x_i(t)$ {\bf degrades} at constant speed $-1$, unless it has reached zero (in which case, it remains at zero). In this phase, if we also have $x_j(t) > 0$ for all cells $j$ such that $w_{ij}>0$ (called influencing cells), the repressor level $W x_i(t)$ also decreases with speed 1. We may eventually have $W x_i(t)\leq \eta$, depending on expression level behaviors in influencing cells. 
\item When $W x_i(t) \leq \eta$, a {\bf firing} takes place and resets the expression level to the value 1. The assumption $w_{ii} > \eta$ ensures that $W x_i(t+0) > \eta$ for the repressor field in cell $i$ after resetting. Hence, after every firing, the reset genes return to the degrade phase for a positive-length time interval. 
\end{itemize}
Accordingly, the behavior in each cell consists of an eternal succession of degrading phases interrupted by instantaneous firing, unless the repressor level becomes sufficiently high to prevent any further firing and to maintain the gene level in a vanishing stationary state. 

Prior to investigating these behaviors in more detail, we first make sure that the dynamics is globally well-posed. As the next statement shows, this is granted by assuming that the evolution begins with a degrading phase in every cell. 
An element $x\in [0,1]^N$ is said to be {\bf admissible} if $W x_i > \eta$ for all $i \in \{1,\cdots,N\}$. (NB: any $x\in [\eta,1]^N$ is admissible.)
\begin{lem}
For any admissible $x\in [0,1]^N$, equation \eqref{eq:diffEQ} has a unique global solution such that $x(0)=x$.
\end{lem}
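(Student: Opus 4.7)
The plan is to construct the solution piece-by-piece between successive firing events and to rule out accumulation of these events in finite time. First, on any interval free of firings, the equation reduces to the scalar ODEs $\dot{x}_i = -\text{Sgn}(x_i)$, which admit the unique explicit solution $x_i(t) = \max\{x_i(t_0) - (t - t_0),\, 0\}$. Since $x(t)$ is then Lipschitz, so is $t \mapsto Wx_i(t)$. Starting from an admissible $x(0)$, I would define $t_1 := \inf\{t > 0 : Wx_i(t) \le \eta \text{ for some } i\}$ using this explicit formula. By continuity and the admissibility assumption $Wx_i(0) > \eta$, either $t_1 = +\infty$ (in which case the formula already gives a global solution) or $t_1 > 0$ and, by minimality, the firing set $I_1 := \{i : Wx_i(t_1) \le \eta\}$ coincides with $\{i : Wx_i(t_1) = \eta\}$ and is non-empty.

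At $t_1$, the firing rule prescribes $x_i(t_1 + 0) = 1$ for $i \in I_1$ and $x_j(t_1 + 0) = x_j(t_1)$ for $j \notin I_1$. I would then verify that the reset configuration $x(t_1 + 0)$ is again admissible: for a firing cell $i \in I_1$, one has $Wx_i(t_1 + 0) \ge w_{ii}\cdot 1 > \eta$ by the standing assumption on the diagonal; for a non-firing cell $j \notin I_1$, the reset can only increase $Wx_j$ (since $w_{ji} \ge 0$ and every firing coordinate jumps from a value in $[0,1]$ up to $1$), so $Wx_j(t_1 + 0) \ge Wx_j(t_1) > \eta$. Hence the construction iterates and produces a sequence of firing times $0 < t_1 < t_2 < \cdots$.

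To exclude accumulation of $\{t_n\}$, I would establish a uniform positive lower bound on the gap between two successive firings of any fixed cell $i$. Immediately after its firing, $x_i = 1$; for $i$ to fire again one needs $Wx_i \le \eta$, and the pointwise estimate $Wx_i \ge w_{ii} x_i$ forces $x_i \le \eta/w_{ii} < 1$ at that moment. Combined with $|\dot{x}_i| \le 1$, this yields a waiting time of at least $\tau_i := 1 - \eta/w_{ii} > 0$. Setting $\tau := \min_i \tau_i$, the total number of firings in any interval $[0, T]$ is bounded by $N(\lceil T/\tau \rceil + 1)$, so $t_n \to \infty$ and the procedure defines $x$ on all of $\R^+$. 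Uniqueness follows from uniqueness of the explicit ODE solution on each inter-firing interval, together with the unambiguous determination of $t_1$ and $I_1$ from the equation. The most delicate point --- and the place where the standing hypothesis $w_{ii} > \eta$ is indispensable --- is the preservation of admissibility at simultaneous resets: without this bound, a freshly reset cell could instantly satisfy the firing criterion again and the dynamics would be ill-defined.
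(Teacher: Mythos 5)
Your proposal is correct and follows essentially the same route as the paper: piecewise-linear construction between firings, preservation of admissibility at resets via $w_{ii}>\eta$, and the key estimate $\eta \geq w_{ii}\,x_i$ at a firing of cell $i$ to control the firing times. The only cosmetic difference is that you bound the gap between consecutive firings of a given cell directly from below by $1-\eta/w_{ii}>0$, whereas the paper obtains the same inequality by a contradiction/pigeonhole argument under an assumed accumulation of firing times; both rest on the identical estimate.
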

\begin{proof}
Local existence is a direct consequence of the admissibility condition. Moreover, we have $x_i(t)=(x_i-t)^+$ for all $i$, provided that $t\geq 0$ is sufficiently small. In fact, this expression holds up until a firing occurs.

In addition, for every solution of \eqref{eq:diffEQ}, the function $t\mapsto Wx_i(t)$ is left continuous in every cell; hence we must have $Wx_i(t)\geq \eta$ for all $(i,t)$ (see Lemma 3.1 in \cite{BF14}). Accordingly, the first time $t_1x$ when a firing occurs, {\sl viz.}\ the first {\bf firing time}, is given by 
\[
t_1x=\inf\{s > 0 : W x_i(s) = \eta \text{ for some } i \in \{1,\cdots,N\}\}.
\] 
Clearly, we have $t_1x<+\infty$ (and the infimum here is actually a minimum). 

Let the {\bf firing map} $F$ be defined on admissible vectors $x\in [0,1]^N$ by $Fx=x(t_1x+0)$. The assumption $w_{ii}>\eta$ implies that $Fx$ is also admissible. Hence the second firing time $t_2x=t_1\circ Fx$ is also well-defined and we have $x_i(t)=(Fx_i-t+t_1x)^+$ for $t\in (t_1x,t_2x]$. By induction, one obtains an infinite sequence $\{t_kx\}_{k\in\N}$ of firing times and a unique well-defined solution on every interval $(t_kx,t_{k+1}x]$. 

To conclude, it remains to show that $\lim_{k \to +\infty} t_kx = +\infty$. 
Assume for the sake of contradiction that $t_{\infty} = \lim_{k \to +\infty} t_kx< +\infty$. By the Pidgeonhole Principle, there exists $i \in \{1,\cdots,N\}$ and a subsequence $\{k_n\}_{n\in\N}$ such that $W x_i(t_{k_n}x) = \eta$ for all $n$ and $\lim_{n \to +\infty} t_{k_n}x=t_\infty$. For this $i $, the expression $x_i(t_{k_{n + 1}}x) = \big(1 - t_{k_{n + 1}}x + t_{k_n}x \big)^+$, together with the characterisation of the firing time $t_{k_{n + 1}}x$, implies the estimate
\[
\eta = W x_i(t_{k_{n + 1}}x) \geq \big(1 - t_{k_{n + 1}}x + t_{k_n}x \big)^+ w_{ii}.
\]
Using that $\lim_{n\to +\infty}t_{k_{n + 1}}x - t_{k_n}x=0$, we conclude that $\eta \geq w_{ii}$, contradicting the original assumption on the self-influencing weights $ w_{ii}$.
\end{proof}

\section{Properties of the firing events}
As indicated above, aymptotic behaviors of expression levels depend on the repetitive nature of firing events and their spatial structure with respect to the population. Space-time firing patterns themselves hinge on intercellular coupling topology and, possibly, also on weight intensities. 

In the case of mean-field coupling, firing patterns are strongly structured: every cell $i$ must fire infinitely often, and after each firing in a single cell $i$, every other cell $j\neq i$ must fire either before or simultaneously with the next firing of $i$ (NB: the first alternative occurs for at least one cell $j$, {\sl viz.}\ full synchronization can never occur in this system \cite{BF14}).  

In other cases of coupling, DF systems may have solutions in which some genes never fire, or eventually stop firing. For instance, for $N=2$ and $1-w_{22}>w_{11}(>\eta)$, the expression levels given by\footnote{For $r\in\R$ the expression $\lfloor r \rfloor$ denotes the floor function of $r$. In particular, $\lfloor r - 0 \rfloor = r - 1$ for $r \in \N$.} 
\[
x_1(t)=1-\left(t-\left\lfloor\frac{t}{t_1}-0\right\rfloor t_1\right)\  \text{and}\ x_2(t)=0,\ \forall t>0,
\]
define a periodic solution of equation \eqref{eq:diffEQ} with period $t_1=1-\frac{\eta}{w_{11}}$ (which attracts every trajectory with initial condition $(x_1,x_2)$ with $x_2<x_1$, see section \ref{S-2OSCIL} below for a complete analysis of $N=2$ systems). 

However, by imposing that the neighbor influence be sufficiently small ({\sl i.e.}\ weak coupling regime), one can make sure that any given site fires infinitely often, for every coupling topology.
\begin{lem}
Assume that $w_{ii}>1-\eta$ for some $i\in \{1,\cdots,N\}$. Then for every solution of equation \eqref{eq:diffEQ} with admissible initial condition $x$, we have $x_i(t)>0$ for all $t\in \R^+$ and $x_i(t_{i,k}+0)=1$ for an infinite sequence $\{t_{i,k}\}_{k\in\N}$ of reset times such that 
\[
t_{i,1}\leq \frac{1-\eta}{w_{ii}}\ \text{and}\ 0<t_{i,k+1}-t_{i,k}\leq \frac{1-\eta}{w_{ii}}, \forall k\in\N.
\] 
\label{NOVANISH}
\end{lem}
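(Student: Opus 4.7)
The plan is to exploit the stochasticity of $W$: since $\sum_{j \neq i} w_{ij} = 1 - w_{ii}$ and each $x_j(t) \leq 1$, one has the uniform pointwise bound $Wx_i(t) \leq w_{ii} x_i(t) + (1 - w_{ii})$. The hypothesis $w_{ii} > 1 - \eta$ makes this upper bound small enough to both rule out $x_i(t) = 0$ and force firing at cell $i$ within a controlled time window.

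First, I would establish $x_i(t) > 0$ for all $t \in \R^+$ by contradiction. If $x_i(t^*) = 0$ for some $t^*$, the above bound yields $Wx_i(t^*) \leq 1 - w_{ii} < \eta$. For $t^* = 0$ this contradicts the strict admissibility assumption $Wx_i(0) > \eta$; for $t^* > 0$ it contradicts the global constraint $Wx_i(t) \geq \eta$ established in the proof of Lemma 2.1.

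Second, I would bound the first firing time. Set $T_0 := \frac{1-\eta}{w_{ii}}$, noting $T_0 \in (0,1)$. If cell $i$ did not fire in $(0, T_0]$, then $x_i(T_0) = (x_i(0) - T_0)^+ \leq 1 - T_0$, so the upper bound gives $Wx_i(T_0) \leq w_{ii}(1 - T_0) + (1 - w_{ii}) = \eta$. Combined with the running lower bound $Wx_i \geq \eta$, this forces equality and therefore a firing at cell $i$ at some time $\leq T_0$; hence $t_{i,1} \leq T_0$. To iterate, observe that immediately after the $k$-th firing, $x_i$ is reset to $1$ and the resulting configuration is admissible, so the same argument applied to this new initial data yields $t_{i,k+1} - t_{i,k} \leq T_0$. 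The strict positivity $t_{i,k+1} > t_{i,k}$ follows because $Wx_i(t_{i,k}+0) \geq w_{ii} > \eta$, ensuring a nonzero delay before the next firing. The infinite divergence of the sequence $\{t_{i,k}\}_{k \in \N}$ is then immediate from the upper bound on increments.

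I do not foresee any serious obstacle: the whole argument rests on the single upper-bound inequality above. The point that requires slight care is distinguishing the two versions of the repressor-field constraint—strict admissibility $Wx_i > \eta$ (in force at the initial condition and just after each reset) and the running non-strict bound $Wx_i \geq \eta$ (in force along the trajectory)—both of which are invoked in the no-vanishing step.
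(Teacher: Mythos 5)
Your proposal is correct and follows essentially the same route as the paper: everything rests on the two-sided bound $\eta \leq Wx_i(t) \leq w_{ii}x_i(t) + 1 - w_{ii}$, from which positivity of $x_i$ and the bound $\frac{1-\eta}{w_{ii}}$ on the length of each degrading phase follow. The paper simply states the resulting lower bound $x_i(t)\geq \frac{\eta-(1-w_{ii})}{w_{ii}}>0$ directly rather than arguing by contradiction, but the content is the same.
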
 
\begin{proof} 
Using that the repressor field can never be smaller than $\eta$ in every solution \cite{BF14}, together with the weight normalisation, we have
\[
\eta \leq Wx_i(t) \leq w_{ii} x_i(t) + 1 - w_{ii},
\]
which implies $x_i(t) \geq  \frac{\eta  - (1-w_{ii})}{w_{ii}}>0$ as desired. Moreover, this inequality imposes that the length of every degrading phase cannot exceed $1-\frac{\eta  - (1-w_{ii})}{w_{ii}}=\frac{1-\eta}{w_{ii}}$; {\sl i.e.}\ any two consecutive resets of $x_i$ have to take place within a time interval of length $\frac{1-\eta}{w_{ii}}$.  
\end{proof}

Depending on parameters, we also suspect that two consecutive firings can happen in a cell without the firing of any other gene in the interim. However, similarly to mean-field coupling, such events never occur, provided that all weights from influencing cells are sufficiently small (and weight conservation is assumed). 
\begin{lem}
Assume that $W$ is doubly-stochastic and suppose that $\max_{j\neq i}w_{ij}< \frac{1}{N}$ for some $i\in \{1,\cdots,N\}$. Then, after any firing in cell $i$ alone, there must be a firing in another cell $j\neq i$, before $i$ fires again.
\end{lem}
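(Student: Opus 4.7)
I proceed by contradiction. Suppose cell $i$ fires alone at $t_0$, let $t_1 > t_0$ be its next firing time, and assume no cell $j \ne i$ fires in the half-open interval $(t_0, t_1]$. Write $\Delta := t_1 - t_0$, $y_k := x_k(t_0)$, and $\tau_k := \min(y_k, \Delta)$ for $k \ne i$ (the degradation time of cell $k$ during the interval). Since $i$ was just reset at $t_0$, one has $x_i(t_1) = 1 - \Delta$ and $x_k(t_1) = y_k - \tau_k$ for $k \ne i$. Subtracting the firing identities $Wx_i(t_0) = Wx_i(t_1) = \eta$ then produces the \emph{balance}
\[
w_{ii}(1 - y_i - \Delta) \;=\; \sum_{k \ne i} w_{ik}\,\tau_k,
\]
from which $\Delta \le 1 - y_i$, and, using $\tau_k \le y_k$ together with the firing identity $\sum_k w_{ik} y_k = \eta$, also $\Delta \ge 1 - \eta/w_{ii}$.

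The plan is to sandwich $\sum_{k \ne i}\tau_k$ between two strict bounds. The hypothesis $\max_{k \ne i}w_{ik} < 1/N$ applied to the right side of the balance yields
\[
\sum_{k \ne i}\tau_k \;>\; N\,w_{ii}(1 - y_i - \Delta).
\]
For the matching upper bound I would expand each non-firing condition $Wx_j(t_1) > \eta$ for $j \ne i$ into $Wx_j(t_0) - \eta > \sum_{k \ne i}w_{jk}\tau_k - w_{ji}(1 - y_i - \Delta)$, sum over $j \ne i$, and apply the column-stochasticity identities $\sum_{j \ne i}w_{ji} = 1 - w_{ii}$ and $\sum_{j \ne i}w_{jk} = 1 - w_{ik}$; invoking the balance once more to eliminate $\sum_{k \ne i}w_{ik}\tau_k$ then gives $\sum_{k \ne i}\tau_k < S(t_0) - N\eta + (1 - y_i - \Delta)$, where $S(t_0) := \sum_k y_k$. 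Combining both bounds collapses the problem to the scalar inequality $(Nw_{ii} - 1)(1 - y_i - \Delta) < S(t_0) - N\eta$, which via $\sum_k w_{ik}y_k = \eta$ can be rewritten with $S(t_0) - N\eta = \sum_k y_k(1 - Nw_{ik})$.

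To close the argument, I would split on whether some $x_k$ fully depletes during $(t_0, t_1)$, i.e., whether $D := \{k \ne i : y_k < \Delta\}$ is empty. If $D = \emptyset$, the balance forces $\Delta = w_{ii}(1 - y_i)$, and the non-firing constraint at $t_1$ becomes $\sum_{k \ne i}w_{jk}y_k > \eta + \Delta - w_{ji}$ for each $j \ne i$; in $N = 2$ this combined with the firing identity immediately yields $y_j > 1$, and for general $N \ge 3$ I expect a suitable linear combination of these $N-1$ inequalities --- exploiting the pointwise bound $w_{ij} < 1/N$ rather than merely the coarser $\sum w_{ij} < (N-1)/N$ --- to force the same impossibility. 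In the depletion case, picking $k_0 \in D$, the forced largeness of $\Delta$ combined with $x_{k_0}(t_1) = 0$ should drive $Wx_{k_0}(t_1)$ below $\eta$, contradicting the non-firing of $k_0$. The main obstacle is the linear-combination step in the non-depletion case for general $N$: this is precisely where the pointwise bound $\max_{k \ne i}w_{ik} < 1/N$ is used essentially rather than the weaker total-mass bound, and finding the right combination that closes cleanly without finer assumptions on the matrix entries is the delicate part of the proof.
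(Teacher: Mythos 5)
Your reduction is set up correctly as far as it goes (the balance identity, the summed non-firing bound, and the collapse to $(Nw_{ii}-1)(1-y_i-\Delta)<\sum_k(1-Nw_{ik})y_k$ all check out under double stochasticity), but the proof is not complete: the step you flag yourself --- the ``linear-combination'' argument in the non-depletion case for general $N\ge 3$, and the depletion case as well --- is exactly where the hypothesis $\max_{j\neq i}w_{ij}<\tfrac1N$ has to do its work, and it is left as an expectation rather than an argument. Worse, the scalar inequality you have reduced to cannot be contradictory on its own: take $y_k=0$ for every $k\neq i$ with $w_{ik}>0$ (so $y_i=\eta/w_{ii}$), in which case the balance forces $1-y_i-\Delta=0$ while positive levels on cells with $w_{ik}=0$ make $\sum_k(1-Nw_{ik})y_k>0$. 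So the missing step must re-inject information you have already discarded (the individual non-firing constraints rather than their sum over $j$, the conditions $Wx_j(t)\ge\eta$ at intermediate times, or the structure $\tau_k=\min(y_k,\Delta)$); it is not a routine completion. There are also smaller unaddressed points: writing $x_i(t_1)=1-\Delta$ presumes $\Delta\le 1$ (i.e.\ that $i$ does not fire from level $0$ at $t_1$), and the strict lower bound $\sum_{k\neq i}\tau_k>Nw_{ii}(1-y_i-\Delta)$ needs $\sum_{k\neq i}\tau_k>0$.

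For comparison, the paper sidesteps the two-time bookkeeping entirely with a static observation: if $x_i$ is the \emph{unique maximal} coordinate of a vector $x$, then $Wx_j<Wx_i$ for some $j\neq i$. Indeed, if $Wx_i\le Wx_j$ for all $j\neq i$, summing over $j$ and using column stochasticity gives $0\le\sum_k(1-Nw_{ik})x_k$, while $\bigl|\sum_{k\neq i}(1-Nw_{ik})x_k\bigr|\le(Nw_{ii}-1)\max_{k\neq i}x_k<(Nw_{ii}-1)x_i$ by the hypothesis $w_{ik}<\tfrac1N$, a contradiction. Right after $i$ fires alone, $x_i=1$ is the unique maximum and remains so under the flow until the next firing, so $Wx_i$ can never be the first repressor field to reach $\eta$, which is the lemma. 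The decisive difference is \emph{where} the $\tfrac1N$ bound is applied: at post-reset times, when $x_i$ dominates all other coordinates, one coarse estimate suffices; at the firing times, where $y_i$ is small, the same bound is much weaker, which is why your route runs into the delicate case analysis. If you want to salvage your approach, transplant the argument to the post-reset configuration in this way rather than to the pair of firing times.
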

\begin{proof}
We prove that, given any vector $x\in [0,1]^N$ whose (unique) maximal coordinate is $x_i$, there must be $j\neq i$ such that $Wx_j<Wx_i$. By contradiction, assume that $Wx_i\leq Wx_j$ for all $j\neq i$. By summing these inequalities over $j$ and using the definition of the repressor field, one gets
\[
(N-1)\sum_{k=1}^Nw_{ik}x_k\leq \sum_{k=1}^N\sum_{j\neq i}w_{jk}x_k=\sum_{k=1}^N(1-w_{ik})x_k,
\]
(where the equality follows from the assumption $\sum_jw_{jk}=1$ for all $k$) which in turn yields $0\leq \sum_{k=1}^N(1-Nw_{ik})x_k$. 

However, letting $m=\max_{j\neq i}x_j$ and using the assumption $\max_{j\neq i}w_{ij}<\frac{1}{N}$, we get 
\[
\left| \sum_{k\neq i}(1-Nw_{ik})x_k\right|\leq \sum_{k\neq i}(1-Nw_{ik})m=(Nw_{ii}-1)m,
\]
and we must have $Nw_{ii}-1>0$. Since $m<x_i$, we conclude that $\left| \sum_{k\neq i}(1-Nw_{ik})x_k\right|<(Nw_{ii}-1)x_i$, hence the inequality $0\leq \sum_{k=1}^N(1-Nw_{ik})x_k$ is impossible. 
\end{proof}

We aim to relate asymptotic properties of trajectories to the ordering in which cells fire. Towards that goal, for simplicity we shall focus on trajectories for which a unique cell resets at each firing. Indeed, when all weights are distinct, simultaneous resets in several cells are believed to be non-generic (measure zero) events in the long term dynamics, except for trajectories falling into synchrony subspaces when the coupling possesses a symmetry \cite{SGP03}, as in the mean-field case. However, uniqueness in this case can always be recovered by passing to the quotient network. Therefore, the single cell reset limitation is barely restrictive in the analysis of asymptotic properties.

Given such a trajectory and $k\in\N$, let $i_k\in\{1,\cdots,N\}$ be the reset cell at $k$th firing. We call $\{i_k\}_{k\in\N}$ a {\bf firing sequence}. In this context, the notion of a trajectory compatible with a given firing sequence is obvious. 
A firing sequence in which each segment $\{i_k\}_{k=\ell N+1}^{(\ell+1)N}$ ($\ell\in\Z^+$) is a permutation of $\{1,\cdots,N\}$ is said to be {\bf exhaustive}. A trajectory in which every expression level vanishing at some time is reset at the next firing above is said to be {\bf non-degenerate}. 

In the case of the mean-field coupling, every firing sequence 
must be exhaustive and every trajectory must be non-degenerate (unless genes cluster into groups with identical levels, as previously mentioned). 

For different coupling topologies, this may not always be the case. For instance, for $N=3$, $\eta<\frac19$, 
\[
w_{12}=w_{21}=3\eta,\ \text{and}\ w_{11}=w_{22}=w_{32}=w_{31}=4\eta,
\]
the expression levels given by 
\[
x_1(t)=1-\left(t-\left\lfloor\frac{t}{t_1}-0\right\rfloor t_1\right),\ x_2(t)=1-t_1-\left(t-\left\lfloor\frac{t}{t_1}-0\right\rfloor t_1\right)\ \text{and}\ x_3(t)=0,\ \forall t>0,
\]
where $t_1=\frac23$, define a degenerate periodic solution of equation \eqref{eq:diffEQ} in which genes fire after their level has reached 0, and cell 3 never fires. More precisely, we have $x_i(t)=0$ and $x_{3-i}(t)=1-t_1$ when $Wx_i(t)=\eta$ for $i=1,2$.

For simplicity, in the rest of this paper, we shall only consider non-degenerate solutions with an (eventually) exhaustive firing sequence. 

\section{Analysis of periodic orbits}
In this section, we study {\bf periodic} orbits that return to their initial state after every cell has fired exactly once. The main statement below claims that, when the graph generated by the adjacency matrix $W^T$ is fully connected, non-degenerate periodic orbits are entirely determined by their firing sequences.
\begin{prop}
Assume that $W$ is doubly-stochastic and irreducible. Then, given any periodic exhaustive firing sequence, either no compatible periodic non-degenerate orbit exists, or such a trajectory is unique.  
\label{UNIPER}
\end{prop}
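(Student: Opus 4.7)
The plan is to parameterize any periodic non-degenerate orbit compatible with the given firing sequence by its vector $\tau=(\tau_1,\ldots,\tau_N)$ of inter-firing times, convert the $N$ firing conditions into a linear system $A\tau=(1-\eta)\mathbf{1}$, and then prove that $A$ is invertible under the stated hypotheses on $W$. Uniqueness of the orbit then reduces to uniqueness of the solution of the linear system. Without loss of generality, relabel cells so that the firing sequence within one period reads $1,2,\ldots,N$; this preserves both doubly stochasticity and irreducibility of $W$.

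To derive the system, let $s_0=0<s_1<\cdots<s_N=T$ be the firing times in one period, $\tau_k=s_k-s_{k-1}$, and let $t_{j\to k}$ denote the time elapsed between the most recent firing of cell $j$ and the firing of cell $k$ at time $s_k$. Explicitly, $t_{j\to k}$ is the cyclic sum $\tau_{j+1}+\cdots+\tau_k$ (wrapping modulo $N$ when $j\ge k$), with the convention $t_{k\to k}=T$. Non-degeneracy implies that each level evolves linearly without hitting zero, so $x_j(s_k^-)=1-t_{j\to k}$. The firing condition $\sum_j w_{kj}x_j(s_k^-)=\eta$ together with row stochasticity rearranges to
\[
\sum_{j=1}^N w_{kj}\,t_{j\to k}=1-\eta,\qquad k=1,\ldots,N,
\]
which is precisely $A\tau=(1-\eta)\mathbf{1}$ for an $N\times N$ matrix $A$ determined by $W$ and the firing order.

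The core step is the invertibility of $A$. Given $v\in\R^N$ with $Av=0$, introduce the partial sums $S_k=v_1+\cdots+v_k$ and the total $V=S_N$. A direct case split on $j<k$, $j=k$, $j>k$ gives the uniform identity $t_{j\to k}=S_k-S_j+V\,\mathbf{1}_{\{j\ge k\}}$ (with $v$ in place of $\tau$), so the equations $\sum_j w_{kj}t_{j\to k}=0$ compactly rewrite as $(I-W)S=-Vc$, where $c_k:=\sum_{j\ge k}w_{kj}$. Left-multiplying by $\mathbf{1}^T$ and invoking column stochasticity $\mathbf{1}^T W=\mathbf{1}^T$ annihilates the left-hand side, yielding $V(\mathbf{1}^T c)=0$; since $c\ge 0$ componentwise and $c_1=\sum_j w_{1j}=1$, one concludes $V=0$. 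Then $(I-W)S=0$, and by irreducibility of $W$ together with Perron--Frobenius, $\ker(I-W)=\mathrm{span}(\mathbf{1})$, so $S=\alpha\mathbf{1}$. The condition $S_N=V=0$ forces $\alpha=0$, hence $S=0$ and $v=0$.

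The main difficulty is spotting the algebraic reformulation $(I-W)S=-Vc$, which simultaneously exposes the roles of the two hypotheses: column stochasticity is what enables elimination via $\mathbf{1}^T$, while irreducibility is exactly what pins down $\ker(I-W)$ to a single line. Both ingredients are essential to the argument. Once $A$ is invertible, there is at most one $\tau$ solving $A\tau=(1-\eta)\mathbf{1}$, hence at most one periodic non-degenerate orbit compatible with the prescribed exhaustive firing sequence; if this unique $\tau$ fails the physical constraints $\tau_k>0$ and $T\le 1$, no such orbit exists, which is the first alternative in the statement.
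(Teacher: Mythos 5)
Your kernel argument for the linear system is clean and correct as far as it goes: writing $t_{j\to k}=S_k-S_j+V\mathbf{1}_{\{j\ge k\}}$, reducing $Av=0$ to $(I-W)S=-Vc$, killing $V$ with $\mathbf{1}^T$ via column stochasticity, and then pinning $S$ down with Perron--Frobenius is essentially the same mechanism as the paper's proof (there, summing the firing equations determines the period $t_N$, and irreducibility plus Perron--Frobenius gives uniqueness of the solution of $(\mathrm{Id}-W)x=u$ on the zero-sum subspace). However, there is a genuine gap at the very start: your claim that ``non-degeneracy implies that each level evolves linearly without hitting zero, so $x_j(s_k^-)=1-t_{j\to k}$'' misreads the definition. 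Non-degeneracy only requires that a level which vanishes be reset at the \emph{next} firing; it explicitly allows the firing cell itself to sit at $0$ immediately before its own reset. Since $\dot x_i=-\mathrm{Sgn}(x_i)$, the level saturates at $0$, so when the period satisfies $T\geq 1$ the firing condition reads $\sum_{j\neq k}w_{kj}x_j(s_k^-)=\eta$ with $x_k(s_k^-)=0$, not $\sum_j w_{kj}(1-t_{j\to k})=\eta$. Your system $A\tau=(1-\eta)\mathbf{1}$ is therefore only valid in the regime $T\leq 1$, and your final reduction (``if $T\le 1$ fails, no orbit exists'') is false: the paper's nearest-neighbor example (part (b)) exhibits compatible non-degenerate periodic orbits in which the firing cell fires from $0$, precisely when $2\eta\le w<\frac{2(N-1)}{N-2}\eta$.

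To repair the argument you must treat the second case with the modified firing equations (the diagonal terms $w_{kk}$ drop out and the right-hand side becomes $1-\eta-w_{kk}$, as in the paper), prove uniqueness of the solution there as well, and, crucially, show that the two cases cannot both yield admissible orbits for the same firing word. The paper does this by solving for the period in each case and checking that the inequality characterizing the first case, $N(1-\eta)<\sum_{k}\sum_{j\ge k}w_{i_ki_j}$, is exactly the complement of the condition for the second; without this step, uniqueness within each case does not yet give uniqueness of the periodic orbit.
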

\begin{proof}
Let $\iota=(i_k)_{k=1}^N$ be the initial (fundamental) word of the firing sequence and let $(t_k)_{k=1}^N$ be the corresponding firing times. Consider a non-degenerate (single cell reset) trajectory with firing segment $\iota$ for $t\in [0,t_N]$. Its coordinates $(x_{i_k}(t_N+0))_{k=1}^N$ after the $N$th firing, and hence the initial coordinates $(x_{i_k})_{k=1}^N$ if we also assume this trajectory is periodic, must be given by 
\begin{equation}
x_{i_k}=x_{i_k}(t_N+0)=1-t_N+t_k,\ \forall k\in\{1,\cdots,N\}.
\label{eq:periodicity}
\end{equation}
We aim to show that there is at most one such trajectory, {\sl i.e.}\ that the vector $(t_k)$ is uniquely determined by $\iota$. Notice that monotonicity of the firing times implies $x_{i_k}<x_{i_{k+1}}$, and also $x_{i_1}>0$, because the trajectory is assumed non-degenerate and $x_{i_N}=x_{i_N}(t_N+0)=1$. (If we had $x_{i_1}=0$ then $i_1$ would fire jointly with $i_N$ and this would contradict non-degeneracy.)

Expression \eqref{eq:periodicity} implies that the difference $x_{i_k}-t_k=1-t_N$ does not depend $k$. To proceed, we separate the case $1-t_N>0$, for which no expression level ever reaches 0, from the case $1-t_N\leq 0$ when the reset cell has vanishing expression level immediately prior to its firing. (NB: this cell must be the only one with vanishing level, thanks again to the non-degeneracy assumption.) 

Assume the first case $1-t_N>0$. From expression \eqref{eq:periodicity} and equation \eqref{eq:diffEQ}, one obtains the following gene expression levels immediately before the $k$th firing:
\[
x_{i_j}(t_k)=\left\{\begin{array}{ccl}
1-t_k+t_j&\text{if}&j<k \, ,\\
1-t_N+t_j-t_k&\text{if}&j\geq k \, ,
\end{array}\right.
\]
hence for the repressor field,
\[
W x_{i_k}(t_k)=1-t_N\sum_{j=k}^Nw_{i_ki_j}+Wx_{i_k}-x_{i_k} \, .
\]
The assumption $\sum_i w_{ij}=1$ for all $j$ implies $\sum_k Wx_{i_k}=\sum_k x_{i_k}$. Together with the definition $W x_{i_k}(t_k)=\eta$ of the firing times, this yields the following unique characterization of the firing time $t_N$:
\[
t_N=\frac{N(1-\eta)}{\sum_{k=1}^N\sum_{j=k}^Nw_{i_ki_j}} \, .
\]
In particular, the inequality $1-t_N>0$ holds iff
\begin{equation}
N(1-\eta)<\sum_{k=1}^N\sum_{j=k}^Nw_{i_ki_j} \, .
\label{eq:signCond}
\end{equation}
We proceed similarly in the case $1-t_N\leq 0$. Using the non-degeneracy assumption, the expression levels immediately before the $k$th firing are now given by
\[
x_{i_j}(t_k)=\left\{\begin{array}{ccl}
1-t_k+t_j&\text{if}&j<k \, , \\
0&\text{if}&j=k \, ,\\
1-t_N+t_j-t_k&\text{if}&j> k \, ,
\end{array}\right.
\]
from which we obtain
\[
W x_{i_k}(t_k)=1-w_{i_ki_k}-t_N\sum_{j=k+1}^Nw_{i_ki_j}+Wx_{i_k}-x_{i_k} \, ,
\]
and then
\[
t_N=\frac{N(1-\eta)-\sum_{k=1}^Nw_{i_ki_k}}{\sum_{k=1}^N\sum_{j=k+1}^Nw_{i_ki_j}} \, .
\]
In particular, the existence condition $1-t_N\leq 0$ amounts to 
\[
N(1-\eta)\geq \sum_{k=1}^N\sum_{j=k}^Nw_{i_ki_j} \, ,
\]
which is complementary to the existence condition \eqref{eq:signCond} in the previous case. 

With $t_N$ uniquely specified in terms of the weights $w_{ij}$ and the firing sequence $\iota$, we have that the firing time equations 
\[
W x_{i_k}(t_k)=\eta,\ k\in\{1,\cdots,N\}
\]
can be rewritten in vector form as $(\text{Id}-W)x=u$, where $u$ only depends on $w_{ij}$ and $\iota$ and belongs to the space 
\[
\Sigma=\left\{x\in\R^N\ : \ \sum_i x_i=0\right\} \, ,
\]
orthogonal to the eigenvector $(1,\cdots,1)$. Writing $x=x_\parallel+x_\perp$, where $x_\parallel=c(1,\cdots,1)^T$ for some $c\in\R$ and $x_\perp\in\Sigma$, the firing time equations become $(\text{Id}-W)x_\perp=u$. Moreover, given any $x_\perp$, the constant $c$ is determined by using the normalization $x_{i_N}=1$, {\sl i.e.}\ $c=1-(x_\perp)_{i_N}$.

Now, using the irreducibility of $W$, the Perron-Frobenius Theorem implies that the leading eigenvalue 1 of $W$ is simple, with eigenvector  $(1,\cdots,1)^T$. Therefore, the equation $(\text{Id}-W)x_\perp=u$ has at most a single solution in $\Sigma$. 

At this stage, the proposition is proved. However, we complement the proof by showing that the equation $(\text{Id}-W)x_\perp=u$ has a solution in $\Sigma$.
Notice that the stochastic matrix $\left(W^N\right)^T$ is scrambling, \emph{i.e.} for each pair of rows of $\left( W^N \right)^T$, there is a column on which both rows have strictly positive entries. Indeed, the irreducibility of $W$  implies that, for any pair $i,j$, there is a word $(i_\ell)_{\ell=1}^L$ with $L\leq N-2$ such that 
\[
w_{ii_1}w_{i_1i_2}\cdots w_{i_{L-1}i_L}w_{i_Lj}>0 \, .
\]
Using the assumption $w_{ii}>\eta$ for all $i$, we can assume that all such chains have length $N$. This shows that any two columns of $W^N$ have at least one positive element in a coincident row.
That $\left(W^N\right)^T$ is stochastic and scrambling implies that the restriction $W^N|_{\Sigma}$ is a contraction for the $\ell^1$-norm on $\R^N$ (Exercise 2.27 in \cite{S73}). Moreover, $u\in\Sigma$ implies that the sum $\sum_{\ell=0}^{N-1}W^\ell u\in\Sigma$. Therefore, the series 
\[
\sum_{k=0}^{+\infty}W^{Nk}\sum_{\ell=0}^{N-1}W^\ell u\in\Sigma
\]
converges in $\ell^1(\R^N)$ and solves the equation $(\text{Id}-W)x_\perp=u$. 
\end{proof}
For future purposes, we denote by $x_\iota$ the solution of the equation $(\text{Id}-W)x=u$ in 
\[
{\cal U}_{\iota}=\left\{x=(x_{i_k})_{k=1}^N\ :\ 0<x_{i_1}<x_{i_2}<\cdots<x_{i_{N-1}}<x_{i_N}=1\right\}.
\]

In addition to proving uniqueness, the analysis of the firing time equations can also yield information about the existence of non-degenerate periodic orbits ({\sl viz.}\ whether or not the trajectory issued from $x_\iota$ is actually periodic with firing pattern $\iota$), depending on coupling topology. For instance, in the case of weighted global coupling
\[
Wx_i=(1-\epsilon)x_i+\epsilon\sum_{j=1}^N\alpha_j x_j,\ \forall i\in\{1,\cdots,N\},
\]
where $\alpha_j\geq 0$ and $\sum_{j}\alpha_j=1$, a periodic orbit can exist for every exhaustive firing sequence $\iota=\{i_k\}_{k=1}^{N}$, and it does iff $\epsilon<\epsilon_\iota$ where the coupling threshold $w_\iota$ is known explicitly \cite{BF14}. (NB: Non-uniform weights $\alpha_i\neq \frac{1}{N}$ correspond to effective repressor fields associated with clustered configurations of the mean field coupling.) 

For more general coupling topologies, existence may depend on the firing sequence (in addition to coupling intensity), as illustrated with the following example. 
\begin{ex} ({\sl Nearest neighbor couplings on periodic chains.}) Let $W = (w_{ij})_{ij}$ be given by 
\[
w_{ij}=(1-w)\delta_{ij} +\frac{w}{2}\left(\delta_{(i-1)j}+\delta_{(i+1)j}\right),\ \forall i,j\in\{1,\cdots,N\} \, ,
\]
where the index 0 is identified with $N$, and $N+1$ is identified with 1.
\begin{itemize}
\item[(a)] In the case where $N=2n$ is even, no non-degenerate periodic exists which is compatible with any firing sequence $\{i_k\}_{k=1}^{2n}$ for which 
\[
\left\{\begin{array}{l}
i_k\ \text{is odd for}\ k\in\{1,\cdots,n\} \, ,\\
i_k\ \text{is even for}\ k\in\{n+1,\cdots,2n\} \, .
\end{array}\right.
\]
\item[(b)] The non-degenerate periodic orbit compatible with the firing sequence $\{k\}_{k=1}^{N}$ ({\sl i.e.}\ $i_k=k$) exists iff $w<\frac{2(N-1)}{N-2}\eta$.
\end{itemize}
\end{ex}
\begin{proof}
(a) From the proof of Proposition \ref{UNIPER}, one has that the vector $u$ in the component equation $(I - W) x_\perp = u$ is given by 
\[
u_{i_k}=1-\eta-t_N\sum_{j=k}^Nw_{i_ki_j} 
\]
in the case $1-t_N>0$, and 
\[
u_{i_k}=1-\eta-w_{i_ki_k}-t_N\sum_{j=k+1}^Nw_{i_ki_j}
\]
in the complementary case. For such coupling, the assumption on the firing sequence implies that, in both cases, the components $u_{i_k}$ do not depend on $k\in \{1,\cdots n\}$ (resp.\ on $k\in \{n+1,\cdots 2n\}$); that is, we have $u_{i+2}=u_i$ for all $i$. Hence the vector $x_\perp$, and thus the periodic configuration $x_\iota$ itself, must satisfy the same property, {\sl viz.}\ $(x_\iota)_{i+2}=(x_\iota)_i$. As a consequence, all cells with even (resp.\ odd) indexes must fire together. This property violates the single cell reset assumption.

\noindent
(b) Together with the choice of the firing sequence, that the repressor field commutes with cyclic permutations of coordinates implies that we must have $x_i(t_1+0)=x_{i+1\text{mod} N}(0)$ for all $i$; this implies
\[
(x_\iota)_i=1-(N-i)t_1,\ \forall i\in\{1,\cdots,N\}
\]
and $t_N=Nt_1$. Using again the symmetry of the weights $W$, it suffices to check existence on the time interval $[0,t_1]$. 

Strict monotonicity of the gene levels implies strict monotonicity of the repressor levels. Hence, the firing time equation $Wx_1(t_1)=\eta$ implies $Wx_i(t)>\eta$ for all $i>1$ and $t\in [0,t_1]$  as desired, as well as $Wx_1(t)>\eta$ for $t\in [0,t_1)$.

It remains to check that the gene levels are positive/zero as prescribed in each case $1-t_N>0$ or $1-t_N\leq 0$. In the first case, which explicit calculations show occurs iff $w<2\eta$, we obviously have $x_1(t)>0$, and then $x_i(t) > 0$ for all $i$ by monotonicity, for all $t\in [0,t_1]$. For such $w$, the periodic orbit exists and its coordinates never reach 0.

In the complementary case $w\geq 2\eta$, we know that the firing cell fires from 0, {\sl i.e.}\ $x_1(t_1)=0$. Therefore, we must ensure that all other gene levels remain positive, which by monotonicity is implied by $x_2(t_1)>0$. Explicit calculations show that this is equivalent to $w<\frac{2(N-1)}{N-2}\eta$. The conclusion now follows the uniqueness of periodic orbits as in Proposition \ref{UNIPER} (and that $2\eta<\frac{2(N-1)}{N-2}\eta$).
\end{proof}

\section{Asymptotic periodicity of trajectories with exhaustive firing sequences}
We are now in position to formulate the main result of this paper. 
\begin{thm}
Assume that $W$ is doubly stochastic, irreducible, and that there exists a pair $\mathpzc{i},\mathpzc{j}$ of cells such that $w_{\mathpzc{j}\mathpzc{i}}w_{\mathpzc{i}\mathpzc{j}}>0$. Then, for any trajectory $t\mapsto x(t)$ which is compatible with a given $N$-periodic exhaustive firing sequence with initial word $\iota=\{i_k\}_{k=1}^N$, we have
\[
\lim_{k\to+\infty} x(t_{kN}+0)=x_\iota,
\]
where $t_\ell$ ($\ell\in\N$) is the $\ell$th firing time of the trajectory (and as before, $x_\iota\in {\cal U}_\iota$ is the initial condition of the periodic orbit associated with $\iota$.)
\label{ASYMPTHM}
\end{thm}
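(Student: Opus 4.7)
The plan is to interpret the conclusion as convergence of the iterates of a period return map to its unique fixed point, and to derive this from the return map being piecewise affine with a contracting linear part. Let $F_\iota$ be the Poincar\'e-like map sending an initial configuration $a\in{\cal U}_\iota$ (right after the $N$-th firing of a period) to $a'=x(t_N+0)$, obtained by evolving \eqref{eq:diffEQ} through the $N$ firings of the word $\iota$. By Proposition~\ref{UNIPER}, $x_\iota$ is its unique fixed point, so the theorem reduces to $F_\iota^m(a)\to x_\iota$ as $m\to\infty$. Revisiting the computation in the proof of Proposition~\ref{UNIPER}, in each of the two regimes $1-t_N\gtrless 0$ the firing times $(t_k)_{k=1}^N$ are obtained from $a$ by a triangular \emph{linear} system, so $F_\iota$ is piecewise affine; I would analyse each affine piece separately and show that the oscillation of $a-x_\iota$ is non-increasing under $F_\iota$ in both regimes.

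To analyse the linear part, set $b=a-x_\iota$ (so $b_{i_N}=0$) and $\delta t_k=t_k-t_k^{\ast}$. Subtracting the fixed-point firing-time equations yields the triangular recursion
\[
\delta t_k \;=\; \sum_{j<k}\alpha_{kj}\,\delta t_j \;+\; \sum_{j\geq k}\alpha_{kj}\, b_{i_j}\,, \qquad (Mb)_{i_j}=\delta t_j-\delta t_N \, ,
\]
where the non-negative coefficients $\alpha_{kj}$ are $w_{i_k i_j}$ in the regime $1-t_N>0$ and, in the complementary regime, are $w_{i_k i_j}/(1-w_{i_k i_k})$ for $j\neq k$ (with $\alpha_{kk}=0$); in both cases $\sum_{j}\alpha_{kj}=1$. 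Solving the recursion inductively shows that each $\delta t_k$ is a \emph{convex} combination $\mu^{(k)}\!\cdot\! b$ of the coordinates of $b$, whence $\|Mb\|_{\mathrm{osc}}\leq\|b\|_{\mathrm{osc}}$ with $\|b\|_{\mathrm{osc}}:=\max_i b_i - \min_i b_i$; note that $\|\cdot\|_{\mathrm{osc}}$ is a true norm on the hyperplane $\{b_{i_N}=0\}$ because $\R\mathbf{1}$ meets it only at $0$. I would upgrade this to a strict contraction by arguing that the stochastic matrix $P=(\mu^{(k)}_i)$ becomes scrambling after a bounded number $p$ of iterations, in the spirit of Exercise~2.27 of \cite{S73} used for $W^N$ in Proposition~\ref{UNIPER}. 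Under the three hypotheses (irreducibility, $w_{ii}>\eta$ and $w_{\mathpzc{i}\mathpzc{j}}w_{\mathpzc{j}\mathpzc{i}}>0$), a combinatorial tracking of the forward-substituted coefficients across successive periods should show that every $\mu^{(k)}$ eventually carries positive mass on a common coordinate, yielding $\|M^p b\|_{\mathrm{osc}}\leq (1-\varepsilon)\|b\|_{\mathrm{osc}}$ for some $\varepsilon>0$, and hence geometric convergence.

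The strict-contraction step is where I expect the main difficulty to lie. The linear part $M$ is \emph{not} a power of $W$; it is a ``Gauss--Seidel-like'' composition in which rows of $W$ are processed sequentially along the firing order $\iota$, with a possible normalisation at the diagonal in the degenerate regime. Given $w_{ii}>0$, the pair condition $w_{\mathpzc{i}\mathpzc{j}}w_{\mathpzc{j}\mathpzc{i}}>0$ is strictly stronger than what is required to make $W$ aperiodic, and it presumably enters precisely here: it forces the asymmetric, order-dependent recursion to mix the two cells $\mathpzc{i},\mathpzc{j}$ in every period irrespective of their position in $\iota$, and mixing of the remaining coordinates then propagates by irreducibility.
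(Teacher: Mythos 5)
Your set-up is sound, and it is essentially the paper's own strategy written in primal coordinates: the paper conjugates the return map by the Laplacian $\Delta=\mathrm{Id}-W$, so that the linear pieces become the column-stochastic matrices $L_{i,\pm}$ acting on the zero-sum space $\Sigma$, and proves eventual $\ell^1$-contraction there via a scrambling argument; your oscillation seminorm on $b=a-x_\iota$ together with the row-stochastic matrices $P=(\mu^{(k)}_i)$ of convex weights is the transposed formulation of the same Dobrushin-type mechanism, and your non-expansiveness computation is correct. Two structural points are blurred, though. First, the sign choice ("fires from $0$" versus "fires from a positive level") is made firing by firing, not once per period, so the return map has $2^N$ affine pieces and successive periods of a non-periodic trajectory can use different pieces; what must be controlled is an arbitrary product of the corresponding stochastic matrices, not powers of a single $M$, and one needs the continuity of the piecewise-affine map across the domains (Proposition 5.4 of \cite{BF14} in the paper's proof) plus a worst-case reduction, which the paper gets from the entrywise implication $(L_{i,-})_{nm}>0\Rightarrow(L_{i,+})_{nm}>0$, to the all-minus product. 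Second, subtracting the fixed-point firing-time equations as you do presupposes that $x_\iota$ and the current period lie in the same piece; the piecewise/continuity structure is what lets one avoid this assumption.

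The genuine gap is the strict-contraction step, which you explicitly leave as a plan ("a combinatorial tracking \dots should show"): this is where all three hypotheses enter and where essentially the whole proof of Lemma \ref{lem:contract} lives, so the theorem is not actually established by the proposal. The difficulty is real, not routine: in the worst piece the diagonal weight of the firing cell is annihilated ($\alpha_{kk}=0$ in your notation, $(L_{i,-})_{ii}=0$ in \eqref{CRUCIAL}), so a cell firing late in the word cannot simply retain mass, and irreducibility of $W$ alone does not make the order-dependent ("Gauss--Seidel-like") product scrambling. The paper's argument propagates a $W$-path through the firing order roughly one edge per period, and uses the pair hypothesis concretely to get $L_{\mathpzc{i}\mathpzc{i}}\geq w_{\mathpzc{j}\mathpzc{i}}w_{\mathpzc{i}\mathpzc{j}}>0$, i.e.\ a loop at $\mathpzc{i}$ allowing partial products of different lengths to be padded by powers of $L$ to a common exponent with $(L^k)_{\mathpzc{i}j_1}>0$ for all $j_1$, which is the scrambling property \eqref{POSITIVITY}; uniqueness of the limit then comes from Proposition \ref{UNIPER}. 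Until you supply an argument of this kind (or an equivalent one for your matrices $P$), no uniform $\varepsilon>0$ is produced and the convergence claim remains unproven.
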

In addition, by combining the proof below with continuity arguments, one can show that when the periodic orbit associated with $\iota$ exists, for every initial condition in ${\cal U}_\iota$ sufficiently close to $x_\iota$, the firing sequence of the subsequent trajectory is $N$-periodic with initial word $\iota$; this implies that the periodic orbit $x_\iota$ is locally asymptotically stable. 

We note here that the assumption $w_{\mathpzc{i}\mathpzc{j}}w_{\mathpzc{j}\mathpzc{i}}>0$ ensures that the statement of Theorem \ref{ASYMPTHM} holds for any exhaustive firing sequence $\iota$, regardless of the coupling strength (and in particular, whether or not sites fire after reaching zero along the orbit $x(t)$). This assumption can be relaxed if one assumes instead a weak coupling regime, {\sl i.e.}\ $w_{ii}>1-\eta$ for all $i\in\{1,\cdots,N\}$; see Remark \ref{WEAKCOUP} after the proof. 
\begin{proof}
Given $x(0)=x\in {\cal U}_\iota$, let $F_{i_1}x=x(t_1+0)$ be the reset configuration after the first firing. Assuming that the trajectory $x(t)$ is non-degenerate, we have $(F_{i_1}x)_i>0$ for all $i$, and so the map $F_{i_1}$ itself is defined by 
\[
(F_{i_1}x)_i=\left\{\begin{array}{ccl}
x_i-t_1&\text{if}&i\neq i_1 \, ,  \\
1&\text{if}&i=i_1 \, .
\end{array}\right.
\]
The reset maps $F_{i_k}$ for $k\in\{2,\cdots,N\}$ are defined similarly. 
In order to prove the Theorem, we consider the return map 
\[
F_\iota^N=F_{i_N}\circ \cdots\circ F_{i_2}\circ F_{i_1},
\]
and its orbits that never leave the open set ${\cal U}_\iota$. The map $F_\iota^N$ is continuous and piecewise affine. We are going to prove that a sufficiently high iterate $(F_\iota^N)^k$ is a contraction (for the $\ell^1$ norm). It follows that every orbit in ${\cal U}_\iota$ must approach the unique fixed point $x_\iota$ of $F_\iota^N$, which must be located in the closure $\overline{{\cal U}_\iota}$. Theorem \ref{ASYMPTHM} immediately follows. (NB: The periodic orbit associated with $\iota$ exists iff $x\in {\cal U}_\iota$).

In order to prove the promised contraction, we first introduce a change of variables. Let $\Delta=\text{Id}-W$ denote the graph Laplacian operator on $\R^N$. The expression of the firing time $t_k$, and hence the expression of the image $F_{i_k}x$, depends on whether the level $x_{i_k}$ is reset from a positive value, or from 0. Explicit calculations yield 
\[
x_{i_k}-t_k=\left\{\begin{array}{ccl}
\eta-\Delta x_k&\text{if}&x_{i_k}-t_k\geq 0 \, , \\
\frac{\eta-\Delta x_k}{1-w_{kk}}&\text{if}&x_{i_k}-t_k\leq 0 \, ,
\end{array}\right.
\]
recalling that since $W$ was assumed irreducible, we have $w_{kk} < 1$ for all $k \in \{1,\cdots, N\}$. This expression implies the following commutation relation 
\[
\Delta \circ F_{i_k}=G_{i_k}\circ \Delta \, ,
\]
where the maps $G_{i}$ ($i\in\{1,\cdots ,N\}$) are also continuous and piecewise affine, with linear parts $L_{i,\pm}$ depending on whether $x_{i}$ is reset from a positive value ($+$ sign) or from 0 ($-$ sign). The linear parts $L_{i, \pm}$ are given by
\[
L_{i,+}x_j=\left\{\begin{array}{ccl}
x_j+w_{ji}x_i&\text{if}&j\neq i \, , \\
w_{ii}x_i&\text{if}&j=i \, ,
\end{array}\right.
\quad\text{and}\quad
L_{i,-}x_j=\left\{\begin{array}{ccl}
x_j+\frac{w_{ji}}{1-w_{ii}}x_i&\text{if}&j\neq i \, , \\
0&\text{if}&j=i \, ,
\end{array}\right.
\]
for all $j\in\{1,\cdots,N\}$. Note that the $L_{i, \pm}$'s are column-stochastic by the double stochasticity of $W$.

The change of variable $x \mapsto \Delta x$ has the following properties:
\begin{itemize}
\item[$\bullet$] $\Delta$ is injective on ${\cal U}_\iota$, for every permutation $\iota$ of $\{1,\cdots ,N\}$ (see end of proof of Proposition \ref{UNIPER} above),
\item[$\bullet$] $\Delta(\R^N)\subset \Sigma=\left\{x\in\R^N\ : \ \sum_i x_i=0\right\}$.
\end{itemize}
Consequently, we have 
\[
(F_\iota^N)^k=\Delta^{-1}\circ (G_\iota^N)^k\circ \Delta,\ \forall k\in\N,
\]
where $G_\iota^N=G_{i_N}\circ \cdots\circ G_{i_2}\circ G_{i_1}$, and we regard $\Delta^{-1}$ as a map $\Sigma \to \mathcal U_{\iota}$. The contraction of $(F_\iota^N)^k$ for $k$ large enough follows from the following statement, together with the fact that $\|\Delta^{-1}\|_1\|\Delta\|_1<+\infty$.
\begin{lem}\label{lem:contract}
There exist $k\in\N$ and $\gamma<1$ such that 
\[
\|(G_\iota^N)^kx-(G_\iota^N)^ky\|_1\leq \gamma \|x-y\|_1,\ \forall x,y\in \Sigma.
\]
\end{lem}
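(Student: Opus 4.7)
The plan is to reduce the required contraction estimate for $(G_\iota^N)^k$ to a scrambling statement for sufficiently long products of the matrices $L_{i,\pm}$. Since $(G_\iota^N)^k$ is continuous and piecewise affine, for any $x,y\in\Sigma$ the segment $[x,y]$ decomposes into finitely many subintervals on each of which the map is affine, with linear part a product
\[
L^\sigma=L_{i_{kN},\sigma_{kN}}\circ\cdots\circ L_{i_1,\sigma_1}
\]
(indices cycling through $\iota$, $\sigma_j\in\{+,-\}$). Telescoping along the segment yields
\[
(G_\iota^N)^k(x)-(G_\iota^N)^k(y)=\sum_\sigma\mu_\sigma L^\sigma(x-y),
\]
a convex combination over the finitely many admissible sign sequences $\sigma$. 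Since $W$ is doubly stochastic, each $L_{i,\pm}$ is column-stochastic, hence so is every $L^\sigma$, and the Markov--Dobrushin ergodicity inequality gives $\|L^\sigma v\|_1\le\tau_1(L^\sigma)\|v\|_1$ for $v\in\Sigma$, where the ergodicity coefficient satisfies $\tau_1(L^\sigma)<1$ as soon as $L^\sigma$ is scrambling (every pair of columns shares a row on which both entries are strictly positive). It therefore suffices to find $k$ such that every admissible product $L^\sigma$ is scrambling.

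To decouple the scrambling question from the choice of $\sigma$, I would use the identity $L_{i,+}=(1-w_{ii})L_{i,-}+w_{ii}I$, which shows that both $L_{i,+}$ and $L_{i,-}$ dominate entrywise a common nonnegative matrix $\mathpzc{W}_i$ (obtained by setting the $(i,i)$ entry of $L_{i,+}$ to zero): $\mathpzc{W}_i$ coincides with the identity in every column $j\ne i$ and has column $i$ equal to $\sum_{j\ne i}w_{ji}e_j$. By monotonicity of nonnegative matrix multiplication,
\[
L^\sigma\ge\mathpzc{W}^{(kN)}:=\mathpzc{W}_{i_{kN}}\mathpzc{W}_{i_{kN-1}}\cdots\mathpzc{W}_{i_1},
\]
independently of $\sigma$. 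Since the matrices $\mathpzc{W}_i$ are fixed, a strictly positive uniform lower bound on the nonzero entries of $\mathpzc{W}^{(kN)}$ is automatic, and it is enough to verify that the \emph{support pattern} of $\mathpzc{W}^{(kN)}$ is scrambling.

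The main obstacle is this support-scrambling property, and it is here that the pair hypothesis $w_{\mathpzc{i}\mathpzc{j}}w_{\mathpzc{j}\mathpzc{i}}>0$ plays the central role. The $\mathpzc{W}$-dynamics admits a transparent reading: a particle at state $s$ is forced to move to some non-self neighbor of $s$ (in the graph of $W$) every time $i_t=s$, and stays otherwise. Irreducibility of $W$ provides, for any starting state $k$, a path $k=v_0\to v_1\to\cdots\to v_p=\mathpzc{i}$ of length $p\le N-1$ in $W$'s off-diagonal graph; such a path can be executed in the $\mathpzc{W}$-dynamics in at most $N^2$ steps by choosing, at each forced move, the next vertex on the path, the key point being that a transition $v_{r-1}\to v_r$ always becomes available within the current or next cycle, and that the particle at $v_{r-1}$ is not moved by any $\mathpzc{W}_j$ with $j\ne v_{r-1}$ during the wait. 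Once the particle reaches $\mathpzc{i}$, the pair hypothesis lets it shuttle indefinitely between $\mathpzc{i}$ and $\mathpzc{j}$ (choose $\mathpzc{j}$ whenever $\mathpzc{W}_{\mathpzc{i}}$ fires, and $\mathpzc{i}$ whenever $\mathpzc{W}_{\mathpzc{j}}$ fires); a short case analysis shows that, from every starting state $k$, the particle can be arranged to occupy the same fixed state $\mathpzc{m}\in\{\mathpzc{i},\mathpzc{j}\}$ (determined by whether $\mathpzc{i}$ or $\mathpzc{j}$ appears first in $\iota$) at the end of every complete cycle, provided enough cycles have elapsed. For such $k$, the common state $\mathpzc{m}$ lies in $\mathrm{supp}(\mathpzc{W}^{(kN)}e_k)\cap\mathrm{supp}(\mathpzc{W}^{(kN)}e_l)$ for every pair $(k,l)$, which delivers the required scrambling and completes the proof.
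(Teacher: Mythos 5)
Your proposal is correct and takes essentially the same route as the paper: you reduce the contraction of the piecewise affine iterate to a scrambling property of the sign-products of the $L_{i,\pm}$ (via column-stochasticity and the ergodicity-coefficient criterion), pass to a common lower bound with the support of the all-minus matrices (your $\mathpzc{W}_i$ plays exactly the role of the paper's implication $(L_{i,-})_{nm}>0\Rightarrow(L_{i,+})_{nm}>0$), and prove scrambling by routing mass from any starting cell to $\mathpzc{i}$ via irreducibility and then parking it there at the end of each cycle using $w_{\mathpzc{i}\mathpzc{j}}w_{\mathpzc{j}\mathpzc{i}}>0$ and the order of $\mathpzc{i},\mathpzc{j}$ in $\iota$. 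Your particle-transport phrasing is just the probabilistic reading of the paper's explicit matrix-entry estimates.
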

\noindent
{\em Proof of the Lemma.} The map $G_\iota^N$ is continuous and piecewise affine with $2^N$ pieces, each corresponding to a choice of sign $\{+, -\}$ for the firing event at each site in $\{1,\cdots,N\}$. Labeling each piece by a symbolic word $s = (s_k)_{k =1}^N \in \{+,-\}^N$, we write
\[
L_{\iota, s} = L_{i_N, s_N} \circ \cdots \circ L_{i_1, s_1} \, .
\]
Because the firing map is continuous across these piecewise domains (see Proposition 5.4 in \cite{BF14}), to prove the statement of Lemma \ref{lem:contract} it will suffice to show the existence of $k$ for which all matrices $(L_{\iota,s})^k$ are contractions on $\Sigma$ in the $\ell^1$-norm. To this end (similarly to the end of the proof of Proposition \ref{UNIPER}), it is enough to show that the row-stochastic matrices $\left((L_{\iota,s})^k\right)^T$ are all scrambling. However, for every $i\in\{1,\cdots ,N\}$, the matrix entries $(L_{i,\pm})_{nm}$ satisfy
\[
(L_{i,-})_{nm}>0\ \Longrightarrow (L_{i,+})_{nm}>0 \, .
\]
Therefore, one only has to prove that $(L^k)^T$ is scrambling for $k$ sufficiently large, where $L=L_{\iota,(-)^N}$.

Up to a relabeling of the cells $\mathpzc{i},\mathpzc{j}$, we can assume that $\mathpzc{i}$ appears before $\mathpzc{j}$ in the word $\iota$, {\sl viz.}\ we have $\mathpzc{i}=i_{k_1}$ and $\mathpzc{j}=i_{k_2}$ with $k_1<k_2$. We are going to prove the existence of $k\in\N$ such that 
\begin{equation}
(L^k)_{\mathpzc{i}j_1}>0,\ \forall j_1\in\{1,\cdots ,N\} \, ,
\label{POSITIVITY}
\end{equation}
which is merely a restatement of the scrambling property for $(L^k)^T$. To show \eqref{POSITIVITY}, we use the following expressions for the entries of the composing matrices $L_{i,-}$:
\begin{equation}
(L_{i,-})_{jk}=\frac{w_{ji}}{1-w_{ii}}\delta_{ik} \, , \ \forall k\neq j\in\{1,\cdots ,N\}\quad\text{and}\quad 
(L_{i,-})_{jj}=1-\delta_{ij} \, ,\ \forall j\in\{1,\cdots ,N\} \, .
\label{CRUCIAL}
\end{equation}
By irreducibility of $W$, given an arbitrary $j_1\in\{1,\cdots,N\}$, let $(j_k)_{k=1}^L$ (where $L\leq N-1$) be the shortest word such that 
\[
w_{\mathpzc{i}j_L}w_{j_Lj_{L-1}}\cdots w_{j_2j_1}>0 \, .
\]
Let $k_1\in\{1,\cdots,N\}$ be such that $i_{k_1}=j_1$. By \eqref{CRUCIAL} and since $1-w_{i_{k_1}i_{k_1}}<1$, we have 
\[
(L_{i_{k_1},-}\cdots L_{i_1,-})_{j_2j_1}\geq w_{j_2j_1} \, .
\]
(This is a consequence of the relations $(L_{i_{k_1-1},-}\cdots L_{i_1,-})_{j_1j_1}=1$ and $(L_{i_{k_1},-}\cdots L_{i_1,-})_{j_2j_1}=(L_{j_1,-})_{j_2j_1}$.) Let now $k_2$ be such that $i_{k_2}=j_2$ and consider separately the cases $k_1<k_2$ and $k_2<k_1$. In the first case, we have 
\[
(L_{i_{k_2},-}\cdots L_{i_{k_1+1},-})_{j_3j_2}=(L_{i_{k_2},-})_{j_3j_2}\geq w_{j_3j_2} \, ,
\]
and then 
\[
(L_{i_{k_2},-}\cdots L_{i_1,-})_{j_3j_1}\geq w_{j_3j_2}w_{j_2j_1} \, .
\]
In the second case, we have 
\[
(L_{i_{k_N},-}\cdots L_{i_{k_1+1},-})_{j_3j_2}=0 \, ,
\]
so no positive estimate holds for $L_{j_3j_1}$. However, we certainly have 
\[
L_{j_2j_1}\geq w_{j_2j_1}\quad\text{and}\quad
(L_{i_{k_2},-}\cdots L_{i_1,-})_{j_3j_2}=(L_{i_{k_2},-})_{j_3j_2}\geq w_{j_3j_2} \, ,
\]
hence
\[
(L_{i_{k_2},-}\cdots L_{i_1,-}L)_{j_3j_1}\geq w_{j_3j_2}w_{j_2j_1} \, .
\]
By repeating this process, we obtain that there exists $n\in\{0,\cdots, L-1\}$ for every $(j_k)_{k=1}^L$ such that 
\[
(L_{i_{k_L},-}\cdots L_{i_1,-}L^n)_{\mathpzc{i}j_1}\geq w_{\mathpzc{i}j_L}w_{j_Lj_{L-1}}\cdots w_{j_2j_1}>0 \, .
\]
To conclude, we use the following property
\[
\begin{array}{cl}
(L_{i_N,-}\cdots L_{i_{k_L+1},-})_{\mathpzc{i}\mathpzc{i}}&=1\ \text{if}\ \mathpzc{i}\not\in\{i_{k_L+1},\cdots,i_N\},\\
&\geq w_{\mathpzc{j}\mathpzc{i}}w_{\mathpzc{i}\mathpzc{j}}\ \text{if}\ \mathpzc{i}\in\{i_{k_L+1},\cdots,i_N\},
\end{array}
\]
where the second estimate follows from the fact that $\mathpzc{j}$ appears after $\mathpzc{i}$ in $\iota$. Using $w_{\mathpzc{j}\mathpzc{i}}w_{\mathpzc{i}\mathpzc{j}}>0$, it results that $(L^{n+1})_{\mathpzc{i}j_1}>0$ for every $\{j_k\}_{k=1}^L$.

Finally, this last estimate also implies $L_{\mathpzc{i}\mathpzc{i}} \geq w_{\mathpzc{j}\mathpzc{i}}w_{\mathpzc{i}\mathpzc{j}}>0$. Hence, letting $k=\max_{\{j_k\}_{k=1}^L} n$, we can always multiply by $L^{k-(n+1)}$ to obtain the desired estimate \eqref{POSITIVITY}.\end{proof}
\begin{rmk}
In the weak coupling regime $w_{ii}>1-\eta$, the map $G_\iota^N$ is affine with linear part $L_{\iota,(+)^N}$. Using the properties
\[
(L_{i,+})_{jk}=\delta_{jk}(1-\delta_{ij})+w_{ji}\delta_{ik},
\]
instead of \eqref{CRUCIAL}, one can repeat the proof {\sl mutatis mutantis}, now using the estimate
\[
(L_{i_N,+}\cdots L_{i_{k_L+1},+})_{\mathpzc{i}\mathpzc{i}}\geq w_{\mathpzc{i}\mathpzc{i}}>0,
\]
in the case $\mathpzc{i}\in\{i_{k_L+1},\cdots,i_N\}$, to obtain the conclusion \eqref{POSITIVITY}. 
\label{WEAKCOUP}
\end{rmk}

\section{Low dimensional examples}
In complement to previous results on arbitrary DF systems, we now present examples for which the dynamics is entirely known. Ignoring the case of mean-field coupling which has previously been described for populations of arbitrary size $N$, we focus on low dimensional systems $N=2$ and $N=3$.

\subsection{Two coupled oscillators ($N=2$)}\label{S-2OSCIL}
Letting $w_2=w_{12}$ and $w_1=w_{21}$ for simplicity, we get 
\[
Wx_i=(1-w_{3-i})x_i+w_{3-i}x_{3-i},\ i=1,2,
\]
and we consider separately the cases $w_1+w_2<1$, $w_1+w_2=1$ and $w_1+w_2>1$. 

\noindent
{\sl $\bullet$ Case $w_1+w_2=1$.} We have $Wx_1=Wx_2$ for all $x\in [0,1]^2$; hence cells 1 and 2 must fire simultaneously and evolve in sync after the first firing, as an $N=1$ DF oscillator. 

Of note, this property extends to any population size $N\in\N$ in the trivial case when no weight $w_{ij}$ depends on $i$, {\sl viz.}\ $Wx_i=\sum_{j}w_{j}x_j$ for all $i$. Similarly, full synchrony in the trajectory holds for any coupling $W$ when the initial coordinates $x_i$ do not depend on $i$. For convenience, from here on, we shall assume that not all weights, nor all coordinates, are equal. 

\begin{figure}[h]
\centerline{\includegraphics[scale=0.45]{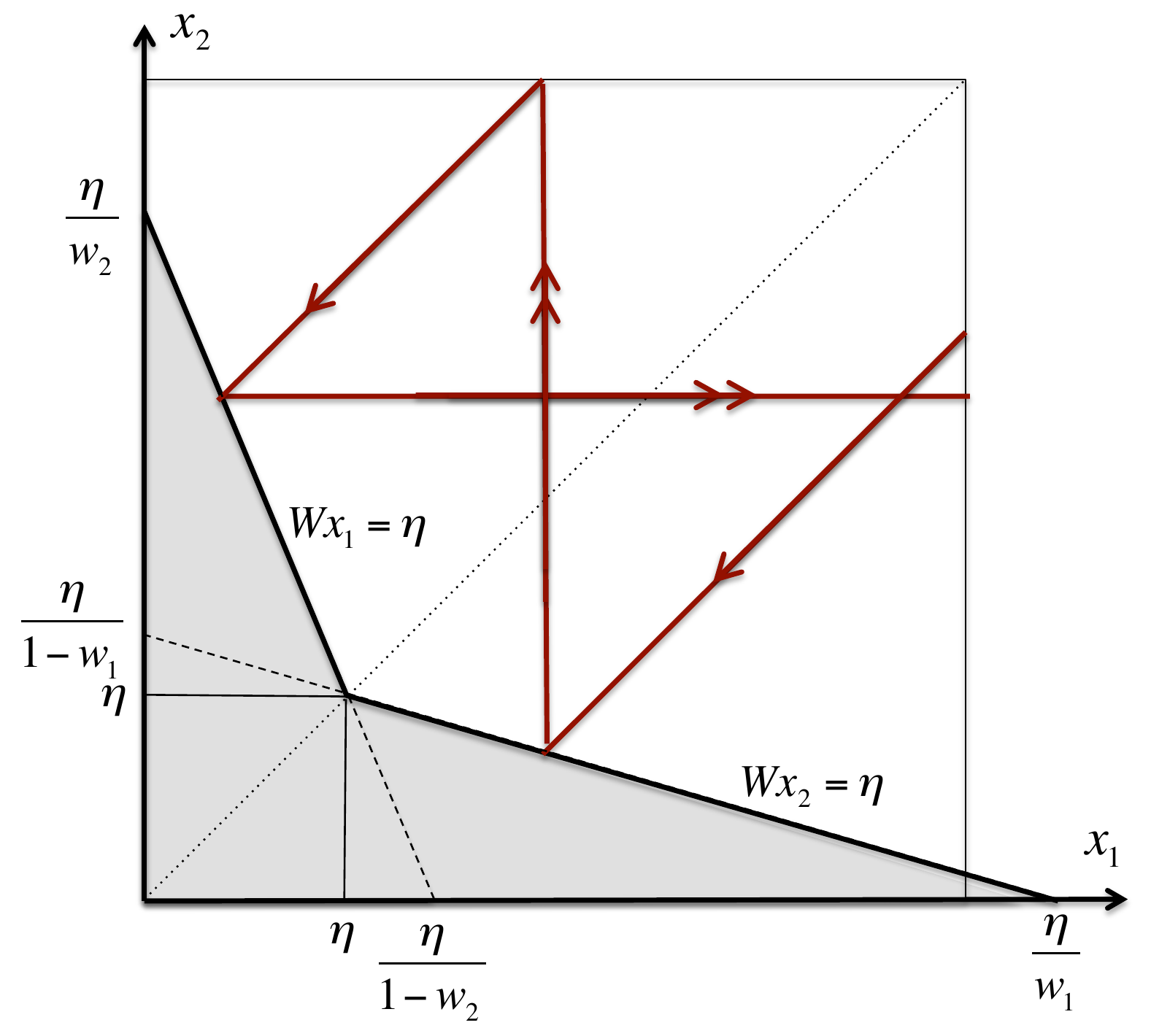}\hspace{0.5cm} \includegraphics[scale=0.45]{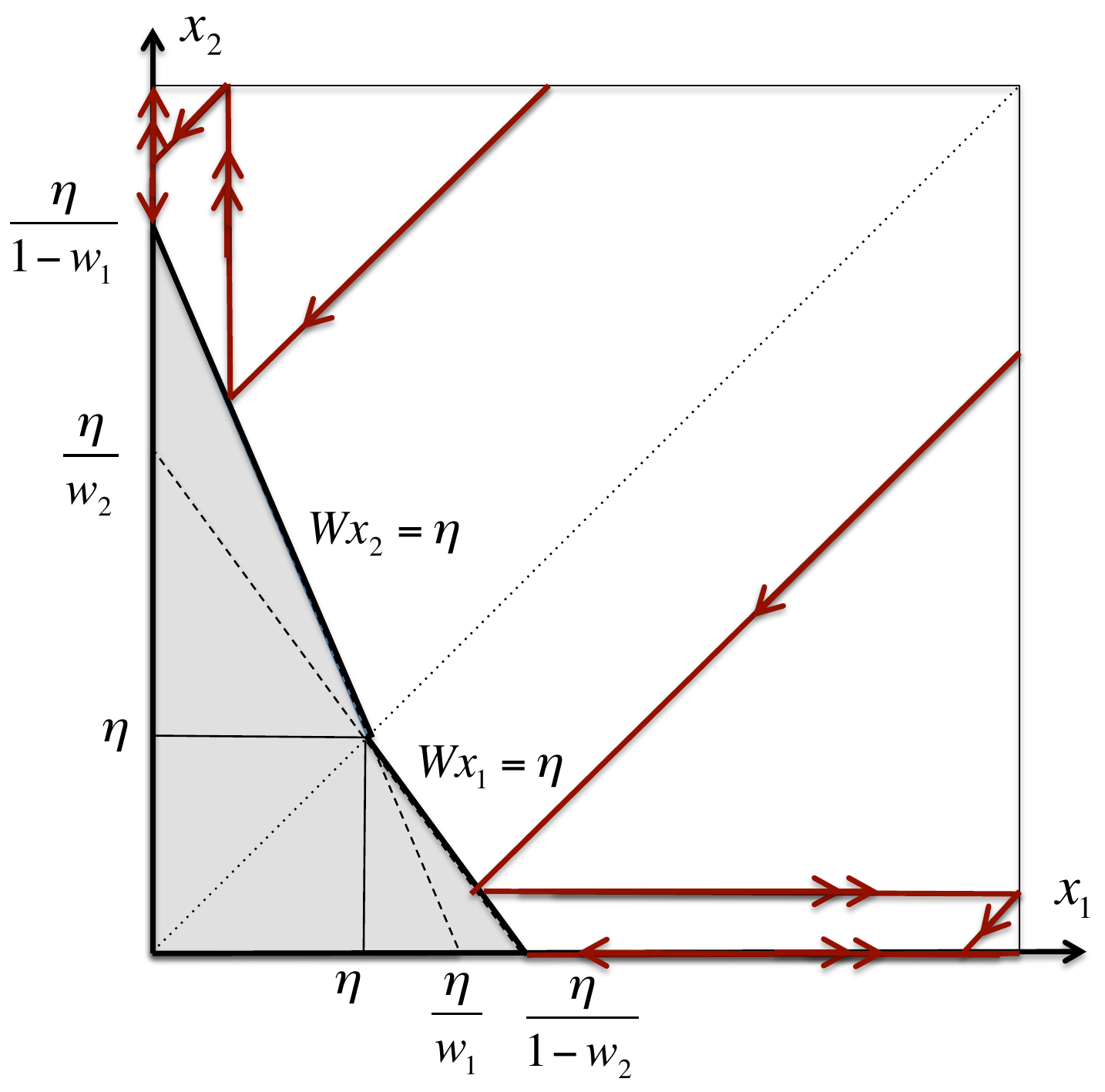}}
\caption{Trajectories in the unit square $[0,1]^2$ for systems of two coupled DF oscillators. Single arrows represent motion at speed 1. Double arrows represent resets at firings. {\sl Left.} $w_1+w_2<1$: every firing sequence must be exhaustive with cells 1 and 2 firing alternatively. {\sl Right.}\ $w_1+w_2>1$: Depending on its initial location with respect to the diagonal $x_1=x_2$, any trajectory reaches in finite time a periodic orbit with either $x_1=0$, or $x_2=0$. NB: The white (un-shaded) region of $[0,1]^2$ corresponds to the set of admissible initial conditions. Recall that we always assume $w_{ii}=1-w_{3-i}>\eta$ for $i=1,2$, so that the dynamics is well-defined. Moreover, $w_i<\eta$ corresponds to the weak coupling case, when expression level $x_i$ cannot reach 0 (see Lemma \ref{NOVANISH}). Conversely, when $w_i\geq \eta$, the level $x_i$ may vanish, depending upon the trajectory.}
\label{TRAJECTN2}
\end{figure}
\noindent
{$\bullet$} {\sl Case $w_1+w_2<1$.} Until it fires, any trajectory initially located in the segment $x_1=1$ lies below the diagonal $x_1=x_2$ of the unit square $[0,1]^2$, and cell 2 must fire first (see left panel in Fig.\ \ref{TRAJECTN2}). Conversely, in any trajectory initially located in the segment $x_2=1$, cell 1 must fire first. It results that every firing sequence (of a trajectory out of the diagonal) must be exhaustive, and cells 1 and 2 must fire alternatively. 

If we also assume that $w_1=w_2>0$ (so that $W$ is doubly stochastic and irreducible since we also assume $1 - w_i > \eta$ from the beginning), using Theorem \ref{ASYMPTHM}, we conclude that every trajectory asymptotically approaches the periodic orbit associated with the firing pattern $\{1,2\}$, for which the corresponding periodic orbit always exists.

\noindent
{$\bullet$} {\sl Case $w_1+w_2>1$.} As opposed to the previous case, for any trajectory initially located in the segment $x_1=1$, only cell 1 can fire; hence $x_2$ eventually reaches and stays at 0. All such trajectories reach in finite time the periodic orbit where cell 1 oscillates alone (whose expression is given before Lemma \ref{NOVANISH}, see right panel in Fig.\ \ref{TRAJECTN2}). Similarly, any trajectory initially at $x_2=1$ reaches a periodic trajectory with $x_1(t)=0$. 

\subsection{Three coupled oscillators ($N=3$)}
For simplicity, we assume that each cell influences all other cells in the same way, $w_{21}=w_{31}(=w_1)$, $w_{12}=w_{32}(=w_2)$ and $w_{13}=w_{23}(=w_3)$, {\sl i.e.}\ we have 
\[
\left\{\begin{array}{l}
Wx_1=(1-w_2-w_3)x_1+w_2x_2+w_3x_3\\
Wx_2=w_1x_1+(1-w_1-w_3)x_2+w_3x_3\\
Wx_3=w_1x_1+w_2x_2+(1-w_2-w_3)x_3
\end{array}\right.
\]
Thanks to this symmetry,  for all pairs $i,j$ of indexes, we have $Wx_i=Wx_j$ in the plane $x_i=x_j$, and then $Wx_1=Wx_2=Wx_3$ along the diagonal of the cube $[0,1]^3$. In order to characterise the dynamics, we need to determine which parts of the planes $Wx_i=\eta$ can be reached under the flow. Similarly to $N=2$, we shall separate the cases $w_1+w_2+w_3<1$ and $w_1+w_2+w_3>1$.

\begin{figure}[h]
\centerline{\includegraphics[scale=0.42]{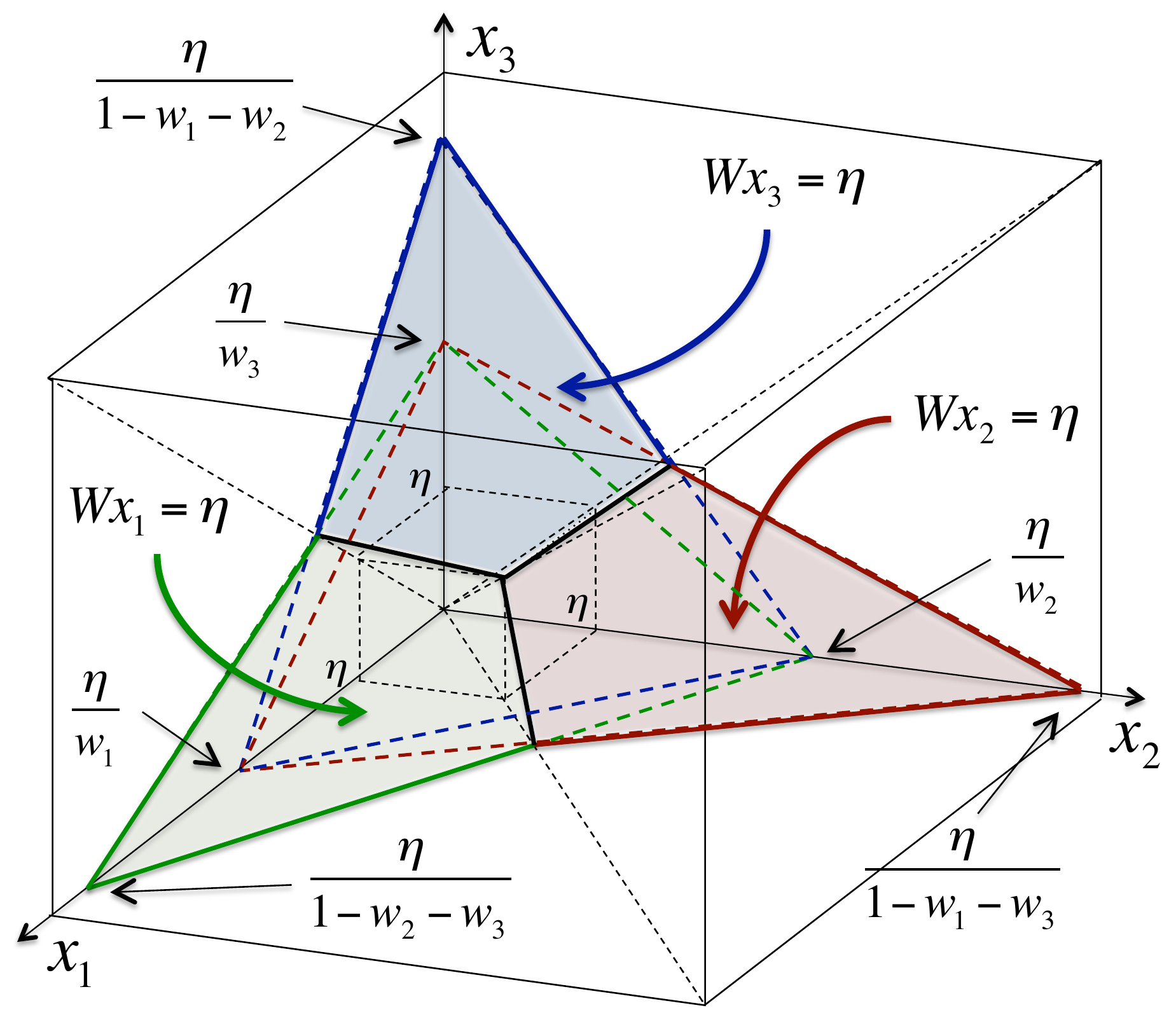}\hspace{0.5cm} \includegraphics[scale=0.42]{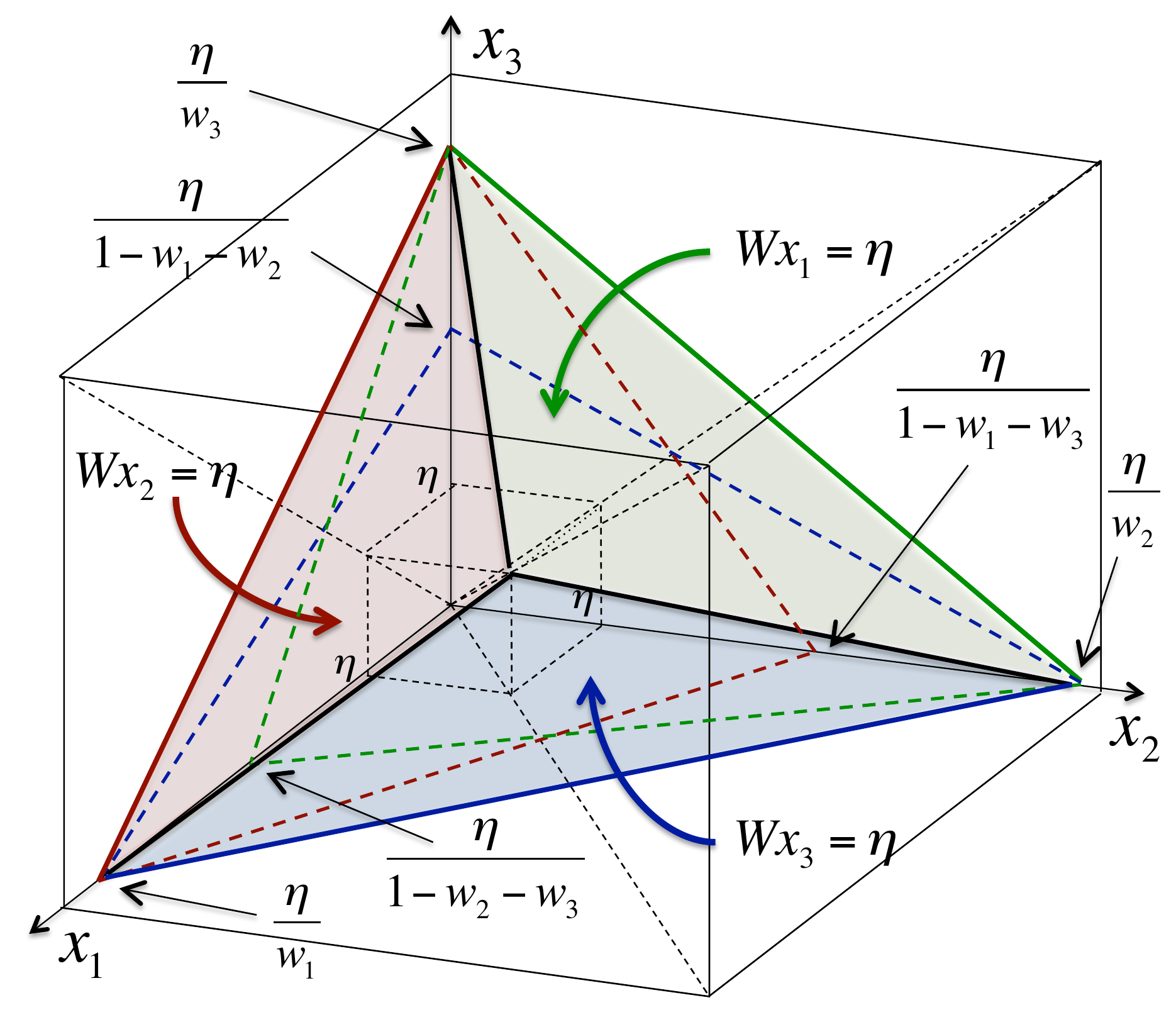}}
\caption{Locations of firing planes $Wx_i=\eta$ in the unit cube $[0,1]^3$ for systems of three coupled DF oscillators with weights satisfying $w_{21}=w_{31}(=w_1)$, $w_{12}=w_{32}(=w_2)$ and $w_{13}=w_{23}(=w_3)$. {\sl Left.}\ $w_1+w_2+w_3>1$: In the pyramidal cone with section delimited by the square $x_1=1$ and with apex at the origin, the plane $Wx_1=\eta$ (green shaded area) lies above the other planes $Wx_2=\eta$ and $Wx_3=\eta$. Since this cone contains all trajectories starting with $x_1=1$, cell 1 must fire first, and by induction, this cell is the only one to fire in this case. By symmetry, similar arguments apply to the cones respectively delimited by $x_2=1$ and $x_3=1$. {\sl Right.}\ $w_1+w_2+w_3>1$: Now, the pyramidal cone $x_1=\max_i\{x_i\}$ decomposes into two regions. In the lower sector $x_2>x_3$, the plane $Wx_3=\eta$ (blue shaded area) lies above the two other planes, while the plane $Wx_2=\eta$ (red shaded area) dominates in the sector $x_2<x_3$. By symmetry, we conclude that the firing sequence of every non-degenerate trajectory must be exhaustive and periodic with pattern $\{1,2,3\}$, $\{1,3,2\}$, or one of their cyclic permutations.}
\label{TRAJECTN3}
\end{figure}

\noindent
{$\bullet$} {\sl Case $w_1+w_2+w_3>1$.} Any trajectory initially located in the square $x_1=1$ remains in the pyramidal cone $x_1=\max_i\{x_i\}$ (delimited by the planes $x_3=0$, $x_1=x_2$, $x_1=x_3$ and $x_2=0$), until it fires. In this sector, the plane $Wx_1=\eta$ lies above the two other planes $Wx_2=\eta$ and $Wx_3=\eta$ (more precisely, the truncated solids respectively delimited by the planes $Wx_2=\eta$ and $Wx_3=\eta$ in this sector, both contain the truncated solid delimited by the plane $Wx_1=\eta$, see Fig.\ \ref{TRAJECTN3}, left panel), hence cell 1 fires first and is reset to the square $x_1=1$. By induction, it results that the trajectory remains in this cone forever. Moreover, both $x_2$ and $x_3$ must eventually vanish and the trajectory reaches in finite time, a periodic orbit with $x_2(t)=x_3(t)=0$. Similar scenarios occur for trajectories initially located in, respectively, the squares $x_2=1$ and $x_3=1$.

Of note, the phase portrait in this case is fully preserved under asymmetric perturbations of parameters, provided that the following conditions hold:
\[
1-w_{31}-w_{32}<w_{13} \wedge w_{23},\ 
1-w_{21}-w_{23}<w_{12} \wedge w_{32},\ \text{and}\ 
1-w_{12}-w_{13}<w_{21} \wedge w_{31}.
\]

\noindent
{$\bullet$} {\sl Case $w_1+w_2+w_3<1$.} In this case, the pyramidal cone $x_1=\max_i\{x_i\}$ decomposes into two regions, according to the sign of $x_2-x_3$ (ignoring the case $x_2=x_3$). In the lower sector $x_2>x_3$, the plane $Wx_3=\eta$ lies above the two other ones (Fig.\ \ref{TRAJECTN3}, right panel), while in the upper sector $x_2<x_3$, the plane $Wx_2=\eta$ dominates. 

Moreover, a trajectory initially in the lower sector is located, after the first firing that resets cell 3, in the sector $x_1>x_2$ of the square $x_3=1$. Therefore, cell 2 must fire second, then cell 1, and then the trajectory is back into the lower section of $x_1=1$. By induction, it results that the trajectory has exhaustive firing sequence with periodic pattern $\{3,2,1\}$. As known from Theorem \ref{ASYMPTHM}, in the doubly stochastic and irreducible case $w_1=w_2=w_3>0$ (which is equivalent to mean-field coupling), it must asymptotically approach the associated periodic trajectory.

Similarly, trajectories starting in the upper sector $x_2<x_3$ of the pyramidal cone $x_1=\max_i\{x_i\}$ have exhaustive firing sequence with periodic pattern $\{2,3,1\}$. Moreover, the fate of trajectories starting form the other cones $x_2=\max_i\{x_i\}$ and $x_3=\max_i\{x_i\}$ can be obtained in the same way, by applying the permutation symmetries. Hence, the dynamics is also fully described in this case.
\medskip

\noindent
{\bf Acknowledgements} 

\noindent
Work supported by CNRS PEPS "Physique Th\'eorique et ses Interfaces".

\end{document}